\newcommand{\subtitle}[1]{%
  \posttitle{%
    \par\end{center}
    \begin{center}\large#1\end{center}
    \vskip0.5em}%
}
\newcommand{\bse}{{\boldsymbol{e}}}
\newcommand{\bsx}{{\boldsymbol{x}}}
\newcommand{\bsy}{{\boldsymbol{y}}}
\newcommand{\bsw}{{\boldsymbol{w}}}
\newcommand{\bsz}{{\boldsymbol{z}}}
\newcommand{\bsalpha}{{\boldsymbol{\alpha}}}
\newcommand{\bsbeta}{{\boldsymbol{\beta}}}
\newcommand{\bsgamma}{{\boldsymbol{\gamma}}}
\newcommand{\rd}{{\mathrm{d}}}
\newcommand{\R}{{\mathbb{R}}} 
\newtheorem{algo}{Algorithm}{\upshape \bfseries}{\itshape}
\newtheorem{definition}{Definition}
\newtheorem{remark}{Remark}
\newtheorem{theorem}{Theorem}
\newtheorem{lemma}{Lemma}
\newtheorem{corollary}{Corollary}
\DeclareMathAlphabet{\mathup}{OT1}{\familydefault}{m}{n}
\newcommand{\widebar}[1]{\mbox{\kern1.5pt\hbox{\vbox{\hrule height 0.6pt \kern0.35ex
        \hbox{\kern-0.15em \ensuremath{#1 }\kern0.0em}}}}\kern-0.1pt}
\newlength{\fixboxwidth}
\begin{document}

\title{A weighted Discrepancy Bound of quasi-Monte Carlo Importance Sampling}

\author{Viacheslav Natarovskii\thanks{Institute for Mathematical Stochastics, 
Georg-August-Universit\"at G\"ottingen, Goldschmidtstra\ss e 7, 37077 G\"ottingen, 
Email: vnataro@uni-goettingen.de, daniel.rudolf@uni-goettingen.de, bjoern.sprungk@uni-goettingen.de}, 
Daniel Rudolf$^{\,*,}$\thanks{Felix-Bernstein-Institute for Mathematical Statistics 
 in the Biosciences, Goldschmidtstra\ss e 7, 37077 G\"ottingen}, 
 Bj\"orn Sprungk$^*$}

\author{Josef Dick\thanks{The University of New South Wales, Sydney, NSW 2052, Australia, Email: josef.dick@unsw.edu.au},
Daniel Rudolf\thanks{Institute for Mathematical Stochastics,
	Universit\"at G\"ottingen, Goldschmidtstra\ss e 7,
	37077 G\"ottingen,  Germany, Email: daniel.rudolf@uni-goettingen.de},
	Houying Zhu\thanks{Melbourne Integrative Genomics \& School of Mathematics and Statistics,
		The University of Melbourne, Parkville, VIC 3010, Australia, Email: houying.zhu@unimelb.edu.au}
}


\date{\today}

\maketitle
\begin{abstract}
Importance sampling Monte-Carlo methods are widely used for the approximation of expectations with respect to partially known probability measures. In this paper we study a deterministic version of such an estimator based on quasi-Monte Carlo. We obtain an explicit error bound in terms of the star-discrepancy for this method.
\end{abstract}

{\bf Keywords: } Importance sampling, Monte Carlo method, quasi-Monte Carlo

{\bf Classification. Primary: 62F15; Secondary: 11K45.} 

\section{Introduction}


In statistical physics and Bayesian statistics it is desirable to compute
expected values 
\begin{equation}  \label{eq: mean}
\mathbb{E}_\pi (f) = \int_{\mathbb{R}^d} f(\bsx)\, \rd \pi(\bsx)
\end{equation}
with $f\colon \mathbb{R}^d \to \R$ and a partially known probability measure $\pi$
on $(\mathbb{R}^d,\mathcal{B}(\mathbb{R}^d))$. 
Here $\mathcal{B}(\mathbb{R}^d)$ denotes the Borel $\sigma$-algebra and 
partially known means
that there is an unnormalized density 
$u \colon \mathbb{R}^d \to [0,\infty)$ (with respect to the Lebesgue measure)
and $\int_{\mathbb{R}^d} u(\bsx) \,\rd \bsx  \in (0,\infty)$, 
such that 
\begin{equation}  \label{eq: pi}
\pi(A) = \frac{\int_A u(\bsx)\, \rd \bsx }{\int_{\mathbb{R}^d} u(\bsy)\, \rd \bsy }, \qquad
A\in \mathcal{B}(\mathbb{R}^d).
\end{equation}


Probability measures of this type are met in numerous applications.
For example, for the density of a Boltzmann distribution one has
\[
u(\bsx) = \exp(-\beta H(\bsx)), \quad \bsx\in \mathbb{R}^d,
\]
with
inverse temperature $\beta>0$ and
Hamiltonian $H\colon \mathbb{R}^d \to \mathbb{R}$. 
The density of a posterior distribution is also of this form. 
Given observations $\bsy \in \mathcal{Y}$, likelihood
function $\ell(\bsy \mid \bsx)$ and prior probability density $p$, 
with respect to the Lebesgue measure on $\mathbb{R}^d$,
\[
u(\bsx) = \ell(\bsy \mid \bsx)\, p(\bsx), \quad \bsx\in \mathbb{R}^d.
\]
In this setting $\mathbb{R}^d$ is considered as parameter- and $\mathcal{Y}$ as observable-space.
In both examples, the
normalizing constant is in general unknown.


In the present work we only consider unnormalized densities $u$ which are zero 
outside of the unit cube $[0,1]^d$. Hence we restrict ourself to
$u\colon [0,1]^d \to [0,\infty)$, i.e., $\pi$ is a probability measure
on $[0,1]^d$, and $f\colon [0,1]^d \to \mathbb{R}$. 
To stress the dependence on the unnormalized density in \eqref{eq: mean},
define
\[
S(f,u) := 
\frac{\int_{[0,1]^d} f(\bsx) u(\bsx)\, \rd \bsx}{\int_{[0,1]^d} u(\bsy)\,\rd \bsy} =  \mathbb{E}_{\pi}(f)
\]
for $f$ and $u$ belonging to some class of functions.
It is desirable to have algorithms which approximately compute $S(f,u)$
by only having access to function values of $f$ and $u$
without knowing the normalizing constant a priori.
A straightforward strategy to do so provides an importance sampling Monte Carlo approach.
It works as follows.
\begin{algo} \label{alg: MC}
	Monte Carlo importance sampling:
	\begin{enumerate}
		\item Generate a sample of an i.i.d. sequence of random variables
		$X_1,\dots,X_n$ with $X_i \sim \mathcal{U}([0,1]^d)$\footnote{By $\mathcal{U}([0,1]^d)$ we denote the uniform distribution 
			on $[0,1]^d$.} and call the result
		$\bsx_1,\dots,\bsx_n$.
		\item Compute 
		\[
		M_n(f,u) := \frac{\sum_{j=1}^n f(\bsx_j) u(\bsx_j)}{\sum_{j=1}^n u(\bsx_j)}.
		\]
	\end{enumerate}
\end{algo}
Under the minimal assumption that $S(f,u)$ is finite, 
a strong law of large numbers argument guarantees that the importance sampling
estimator $M_n(f,u)$ is well-defined, cf. \cite[Chapter 9, Theorem~9.2]{Ow13}.
For uniformly bounded $f$ and finite $\sup u/ \inf u$ an 
explicit error bound of the mean square error is proven in \cite[Theorem~2]{MaNo07}.

Surprisingly, there is not much known about a deterministic version of this method.
The idea is
to substitute 
the uniformly in $[0,1]^d$ distributed i.i.d. sequence 
by a carefully chosen deterministic point set. 
Carefully chosen in the sense that the point set
$P_n=\{\bsx_1,\dots,\bsx_n\} \subset [0,1]^d$ has
``small'' star-discrepancy, that is,
\[
D_{\lambda_d}(P_n) := \sup_{\bsx\in [0,1]^d } 
\left \vert \frac{1}{n} \sum_{j=1}^n \mathbf{1}_{[0,\bsx)}(\bsx_j) - \lambda_d([0,\bsx)) \right \vert 
\]
is ``small''. 
Here, 
the set 
$
[0,\bsx) = \prod_{i=1}^d [0,x_i)
$ 
denotes an anchored box in $[0,1]^d$ 
with $\bsx=(x_1,\dots,x_d)$ and $\lambda_d([0,\bsx))=\prod_{i=1}^d x_i $ 
is the $d$-dimensional Lebesgue measure of $[0,\bsx)$. 
This leads to a quasi-Monte Carlo importance sampling method.
\begin{algo} \label{qmc}
	Quasi-Monte Carlo importance sampling:
	\begin{enumerate} 
		\item Generate a point set $P_n=\{\bsx_1,\dots,\bsx_n\}$ with ``small'' star discrepancy $D_{\lambda_d}(P_n)$. 
		\item Compute 
		\begin{equation}  \label{eq: I_n}
		Q_n(f,u) = \frac{\sum_{j=1}^n f(\bsx_j) u(\bsx_j)}{\sum_{j=1}^n u(\bsx_j)}.
		\end{equation}
	\end{enumerate}
\end{algo}
Our main result, stated in Theorem~\ref{thm: main}, is an explicit error bound 
for the estimator $Q_n$ of the form
\begin{equation}  \label{main_est}
\vert S(f,u) - Q_n(f,u) \vert \leq 4
\frac{\Vert f \Vert_{H_1} \Vert u \Vert_D}{\int_{[0,1]^d} u(\bsx)\rd \bsx} \; D_{\lambda_d}(P_n).
\end{equation}
Here $f$ must be differentiable, such that $\Vert f \Vert_{H_1}$, 
defined in \eqref{eq: H_1_norm} below, 
is finite. As a regularity assumption on $u$ it is assumed that $\Vert u \Vert_D$, 
defined in \eqref{eq: u_D} below,
is also finite. 

The estimate of \eqref{main_est} is proven by two results which might be interesting on its own. 
The first is a Koksma-Hlawka inequality
in terms of a weighted star-discrepancy, see Theorem~\ref{thm_int_error}. 
The second is a relation
between this 
quantity and the classical star-discrepancy, see Theorem~\ref{weightedDiscr}.
To illustrate the quasi-Monte Carlo importance sampling procedure and the error bound
we provide an example in Section~\ref{sec: Ill_ex} where \eqref{main_est} is applicable.

\emph{Related Literature.} The Monte Carlo importance sampling procedure 
from Algorithm~\ref{alg: MC} is well studied. In \cite{MaNo07}, Novak and Math{\'e} prove
that it is optimal on a certain class of tuples $(f,u)$. 
However, recently this Monte Carlo approach attracted considerable attention, 
let us mention here \cite{AgPaSaSt15,ChDi15}. In particular, in \cite{AgPaSaSt15} upper error bounds 
not only for bounded functions $f$ are provided and the relevance of the method 
for inverse problems is presented. 

Another standard approach the approximation of $\mathbb{E}_{\pi}(f)$ are
Markov chain Monte Carlo methods.
For details concerning error bounds we refer to \cite{Jo04,JoOl10,LaMiNi09,Pa16,Ru09,Ru10,Ru12} 
and the references therein. Combinations of importance sampling and Markov chain Monte Carlo
are for example analyzed in \cite{RoTaFl15,ViHeFr16,RuSp2018}.

The quasi-Monte Carlo importance sampling procedure of Algorithm~\ref{qmc} is, to our knowledge, 
less well studied.
An asymptotic convergence result is stated in \cite[Theorem~1]{ChGe14} 
and promising numerical experiments
are conducted in \cite{HoLe05}. 
A related method, a randomized deterministic sampling procedure according to the unnormalized
distribution $\pi$, is studied in \cite{CoVa10}. Recently, \cite{ACHLT18} explore the efficiency of using 
QMC inputs in importance sampling for Archimedean copulas 
where significant variance reduction is obtained for a case study.

A quasi-Monte Carlo approach to Bayesian inversion was used in 
\cite{DiGaLGiSc16}
and in \cite{DiGaLGiSc17}
The latter paper uses a combination of quasi-Monte Carlo and the multi-level method. 
The computation of the likelihood function involves solving a partial differential equation, 
but otherwise the problem is of the same form as described in the introduction. 


\section{Weighted Star-discrepancy and error bound}
Recall that $[0,\bsx)$ for $\bsx\in [0,1]^d$ are boxes anchored at $0$.
As a measure of ``closeness'' between the empirical distribution
$\frac{1}{n} \sum_{j=1}^n \mathbf{1}_{[0,\bsx)}(\bsx_i)$
of a point set $P_n=\{\bsx_1,\dots,\bsx_n\}$ to $\lambda_d([0,\bsx))$ 
we consider the star-discrepancy $D_{\lambda_d}(P_n)$. 
A straightforward extension 
of this quantity taking the probability measure $\pi$ on $[0,1]^d$ into account
is the following weighted discrepancy.

\begin{definition}[Weighted Star-discrepancy]
	For a given point set 
	$P_n=\{\bsx_1,\dots,\bsx_n\}\subset [0,1]^d$ and weight vector 
	$\bsw=(w_1,\dots,w_n)\in \mathbb{R}^n$, 
	which might depend on $P_n$ and satisfies $\sum_{i=1}^n w_i=1$,
	define  
	the \emph{weighted star-discrepancy} by
	\[
	D_\pi(\bsw,P_n) = \sup_{\bsx\in [0,1]^d} 
	\left\vert \sum_{i=1}^n w_i \mathbf{1}_{[0,\bsx)}(\bsx_i) - \pi([0,\bsx))\right \vert.
	\]
\end{definition}

\begin{remark}
	If $\pi$ is the Lebesgue measure on $[0,1]^d$ and the weight vector is 
	$\bsw=(1/n,\dots,1/n)$, then $D_{\lambda_d}(P_n) = D_\pi(\bsw,P_n)$ for any point set $P_n$. 
	For general 
	$\pi$ with unnormalized density $u\colon [0,1]^d\to [0,\infty)$, 
	allowing the representation
	\eqref{eq: pi}, we focus on the weight vector
	\begin{equation} \label{eq: weight_u}
	w^u_i := w_i(u,P_n) :=  \frac{u(\bsx_i)}{\sum_{j=1}^n u(\bsx_j)}, \qquad i =1,\dots,n. 
	\end{equation}
	Here let us emphasize that $\bsw^u:=(w^u_1,\dots,w^u_n)$ depends on $u$ and 
	$P_n$.
\end{remark}

\subsection{Integration Error and weighted Star-discrepancy}
With standard techniques one can prove a Koksma-Hlawka inequality 
according to $D_{\pi}(w,P_n)$. 
For details
we refer to \cite{DiHiPi14}, \cite[Section~2.3]{DiPi10} and \cite[Chapter~9]{NoWo10}.
A similar inequality of a quasi-Monte Carlo importance 
sampler 
can be found in 
\cite[Corollary~1]{AiDi15}. 

Let $[d]:=\{1,\dots,d\}$
and 
$L_2([0,1]^d)$ be the space of square integrable functions with respect to
the Lebesgue measure.
Define the reproducing kernel $K \colon [0,1]^d \times [0,1]^d \to [0,1]$ by
$
K(\bsx,\bsy):= \prod_{i=1}^d (1+ \min\{1-x_i,1-y_i\}).
$
By $H_2=H_2(K)$ we denote the corresponding reproducing kernel Hilbert space, 
which consists of differentiable functions with respect to all
variables with first partial derivatives being in $L_2([0,1]^d)$. 
For $f,g\in H_2$ the inner product is given by
\[
\langle f,g \rangle 
= \sum_{v\subseteq [d]}  \int_{[0,1]^{\vert v \vert}} 
\frac{\partial^{\vert v \vert}}{\partial \bsx_v} f(\bsx_v;1)  
\frac{\partial^{\vert v \vert}}{\partial \bsx_v} g(\bsx_v;1) 
\;\rd \bsx_v,
\]
where for $v\subseteq [d]$ and $\bsx=(x_1,\dots,x_d)$ we write $\bsx_v = (x_j)_{j\in v}$
and $(\bsx_v;1)=(z_1,\dots,z_d)$ with
$z_j = x_j$ if $j\in V$ and $z_j=1$ if $j\not\in v$.
Thus, $H_2$
consists of functions 
which are differentiable according to all
variables with first partial derivatives being in $L_2([0,1]^d)$. 
Note that, for $v\subseteq [d]$ holds 
\[
\frac{\partial^{\vert v \vert}}{\partial \bsx_v} K((\bsx_v;1),\bsy)
= (-1)^{\vert v \vert} \mathbf{1}_{[\bsy_v,1]}(\bsx_v),
\]
where $[\bsy_v,1] = \prod_{i\in v} [y_i,1]$ with $\bsy = (y_1,\dots,y_d)\in [0,1]^d$.
Thus, the reproducing property of the reproducing kernel Hilbert space can be
rewritten as
\begin{equation}  \label{eq: repr_prop}
f(\bsy) = \sum_{v\subseteq [d]} \int_{[\bsy_v,1]} 
(-1)^{\vert v \vert} \frac{\partial^{\vert v\vert} }{\partial \bsx_{v}}f(\bsx_v;1) \rd \bsx_v.
\end{equation}
Further, we define the space $H_1$ of differentiable functions 
$f\colon [0,1]^d \to \mathbb{R}$ 
with finite norm
\begin{equation}  \label{eq: H_1_norm}
\Vert f \Vert_{H_1} := \sum_{v\subseteq [d]} \int_{[0,1]^{\vert v \vert}} 
\left|\frac{\partial ^{\vert v \vert} }{\partial \bsx_v}  f(\bsx_v;1)\right| \rd \bsx_v,
\end{equation}
where for $v = \emptyset$ we have $ \int_{[0,1]^{\vert v \vert}} 
\left|\frac{\partial ^{\vert v \vert} }{\partial \bsx_v}  f(\bsx_v;1)\right| \rd \bsx_v = | f(1) |$.
We also define the semi-norm
\begin{equation}  \label{eq: H_1_seminorm}
\Vert f \Vert_{\widetilde{H}_1} := \sum_{\emptyset \neq v\subseteq [d]} \int_{[0,1]^{\vert v \vert}} 
\left|\frac{\partial ^{\vert v \vert} }{\partial \bsx_v}  f(\bsx_v;1)\right| \rd \bsx_v.
\end{equation}
It is obvious that $\Vert f\Vert_{\widetilde{H}_1} \le \Vert f \Vert_{H_1}$.

We have the following relation between the integration error in $H_1$ and the weighted discrepancy.

\begin{theorem}[Koskma-Hlawka inequality]\label{thm_int_error}
	Let $\pi$ be a 
	probability measure of the form \eqref{eq: pi} with unnormalized density $u\colon [0,1]^d \to [0,\infty)$.
	Then, for $P_n=\{\bsx_1, \ldots, \bsx_n \} \subset [0,1]^d$, 
	arbitrary weight vector $\bsw=(w_1,\dots,w_n)\in \mathbb{R}^n$ with $\sum_{i=1}^d w_i=1$,
	and for all $f \in H_1$ we have
	\begin{equation*}
	\left|
	S(f,u)
	- \sum_{i=1}^n w_i f(\bsx_i)
	\right| 
	\le \|f\|_{\widetilde{H}_1} \,D_{\pi}(\bsw,P_n).
	\end{equation*}
\end{theorem}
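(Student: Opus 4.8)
The plan is a standard reproducing-kernel argument: expand both $S(f,u)$ and the cubature sum $\sum_{i=1}^n w_i f(\bsx_i)$ into the same finite sum over subsets $v\subseteq[d]$ using the reproducing formula \eqref{eq: repr_prop}, and then compare the two expansions term by term. First I would rewrite \eqref{eq: repr_prop} in indicator form: since for fixed $\bsx_v$ one has $\mathbf{1}_{[\bsy_v,1]}(\bsx_v)=\mathbf{1}_{[0,\bsx_v]}(\bsy_v)$ as a function of $\bsy$, the formula reads, for every $\bsy\in[0,1]^d$,
\[
f(\bsy)=\sum_{v\subseteq[d]}(-1)^{|v|}\int_{[0,1]^{|v|}}\mathbf{1}_{[0,\bsx_v]}(\bsy_v)\,\frac{\partial^{|v|}}{\partial\bsx_v}f(\bsx_v;1)\,\rd\bsx_v .
\]

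Next I would integrate this identity against $\pi$ and, separately, form the weighted sum over the points of $P_n$, interchanging in each case the (finite) sum over $v$ and the $\rd\bsx_v$-integration with $\int_{[0,1]^d}\cdot\;\rd\pi$ respectively with $\sum_i w_i$. The interchange is justified by Fubini's theorem, because $\int_{[0,1]^{|v|}}\bigl|\frac{\partial^{|v|}}{\partial\bsx_v}f(\bsx_v;1)\bigr|\,\rd\bsx_v\le\|f\|_{H_1}<\infty$ for $f\in H_1$ and $\pi$ is a probability measure. This yields
\[
S(f,u)=\sum_{v\subseteq[d]}(-1)^{|v|}\int_{[0,1]^{|v|}}\pi\bigl(\{\bsy:\bsy_v\in[0,\bsx_v]\}\bigr)\,\frac{\partial^{|v|}}{\partial\bsx_v}f(\bsx_v;1)\,\rd\bsx_v
\]
together with the analogous identity for $\sum_{i=1}^n w_i f(\bsx_i)$ in which $\pi(\{\bsy:\bsy_v\in[0,\bsx_v]\})$ is replaced by $\sum_{i=1}^n w_i\,\mathbf{1}_{[0,\bsx_v]}((\bsx_i)_v)$.

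Subtracting the two expansions, the summand $v=\emptyset$ contributes $f(1)$ to each (since $\pi([0,1]^d)=1=\sum_{i=1}^n w_i$) and hence cancels; this is exactly why only the seminorm $\|f\|_{\widetilde{H}_1}$, not the full norm $\|f\|_{H_1}$, appears. For $\emptyset\neq v\subseteq[d]$ it then remains to bound the ``projected'' local discrepancy
\[
\Delta_v(\bsx_v):=\pi\bigl(\{\bsy:\bsy_v\in[0,\bsx_v]\}\bigr)-\sum_{i=1}^n w_i\,\mathbf{1}_{[0,\bsx_v]}((\bsx_i)_v)
\]
by $D_\pi(\bsw,P_n)$. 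Choosing in the definition of $D_\pi$ the anchored box $[0,\bsz)$ with $\bsz=(\bsx_v;1)$, one has $\pi([0,\bsz))=\pi(\{\bsy:\bsy_v\in[0,\bsx_v]\})$, because $\pi$ has a Lebesgue density and the sets $\{y_j=x_j\}$ ($j\in v$) and $\{y_j=1\}$ ($j\notin v$) are $\pi$-null; and $\sum_{i=1}^n w_i\,\mathbf{1}_{[0,\bsz)}(\bsx_i)$ equals $\sum_{i=1}^n w_i\,\mathbf{1}_{[0,\bsx_v]}((\bsx_i)_v)$ for all $\bsx_v$ outside the finitely many hyperplanes $\{x_j=(\bsx_i)_j\}$, provided the point set satisfies $P_n\subset[0,1)^d$. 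Hence $|\Delta_v(\bsx_v)|\le D_\pi(\bsw,P_n)$ for $\lambda_d$-almost every $\bsx_v$, which is enough inside the integral.

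Putting everything together, the triangle inequality over the finite family of nonempty $v\subseteq[d]$ gives
\[
\Bigl|S(f,u)-\sum_{i=1}^n w_i f(\bsx_i)\Bigr|\le D_\pi(\bsw,P_n)\sum_{\emptyset\neq v\subseteq[d]}\int_{[0,1]^{|v|}}\Bigl|\frac{\partial^{|v|}}{\partial\bsx_v}f(\bsx_v;1)\Bigr|\,\rd\bsx_v = D_\pi(\bsw,P_n)\,\|f\|_{\widetilde{H}_1},
\]
using the definition \eqref{eq: H_1_seminorm} of the seminorm, which is the claim. I expect the only genuinely fiddly point to be the bookkeeping of open versus closed anchored boxes --- and of points possibly lying on the upper boundary $\{x_j=1\}$ of the cube --- which is dealt with either by the standard convention $P_n\subset[0,1)^d$ or by the elementary limiting argument $\bsz'\downarrow\bsz$ componentwise; the remaining steps are routine.
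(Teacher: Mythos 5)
Your proposal is correct and follows essentially the same route as the paper: expand via the reproducing formula \eqref{eq: repr_prop}, interchange the finite sum over $v$ with the $\pi$-integral and the weighted point sum, observe that the $v=\emptyset$ term cancels (the paper achieves this by passing to $\widetilde f = f - f(1)$), and bound each remaining local discrepancy term by $D_\pi(\bsw,P_n)$. Your explicit treatment of the open-versus-closed anchored boxes is a careful touch that the paper's proof passes over silently, but it does not change the argument.
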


\begin{proof}
	Define the quadrature error 
	$
	e(f,P_n) := \int_{[0,1]^d} f(\bsx)\, \rd  \pi(\bsx) -  \sum_{i=1}^n w_i f(\bsx_i)
	$
	of the approximation of $\mathbb{E}_\pi(f)=S(f,u)$ 
	by $ \sum_{i=1}^n w_i f(\bsx_i)$. Define the function $\widetilde{f} = f - f(1)$. Then $\widetilde{f}(1) = 0$, $e(f,P_n) = e(\widetilde{f}, P_n)$ and $\Vert f\Vert_{\widetilde{H}_1} = \Vert \widetilde{f} \Vert_{H_1}$.
	
	For
	\[
	h(\bsx) := \int_{[0,1]^d} K(\bsx, \bsy)\, \rd \pi(\bsy) - \sum_{i=1}^n w_i\, K(\bsx, \bsx_i),
	\]
	and $v\subseteq[d]$ 
	we have
	$
	\frac{\partial^{\vert v \vert}}{\partial \bsx_v} h(\bsz_v;1) 
	= (-1)^{\vert v \vert} \left ( \pi([0,(\bsz_v;1)))-\sum_{i=1}^n w_i \mathbf{1}_{[0,\bsz_v]}(\bsx_{i,v})\right ).
	$
	A straightforward calculation, see also for instance \cite[formula (3)]{DiHiPi14},
	shows by using \eqref{eq: repr_prop}  
	that
	\begin{align*}
	e(\widetilde{f},P_n) 
	& = \sum_{v\subseteq [d]}
	\int_{[0,1]^{\vert v \vert}} 
	\frac{\partial ^{\vert v \vert} }{\partial \bsx_v}  \widetilde{f}(\bsz_v;1)
	(-1)^{\vert v \vert} 
	\left ( \pi([0,(\bsz_v;1)))-\sum_{i=1}^n w_i \mathbf{1}_{[0,\bsz_v]}(\bsx_{i,v})\right ) \rd \bsz\\
	& = \langle \widetilde{f},h \rangle.
	\end{align*}
	Finally, by 
	$ \left\vert\frac{\partial^{\vert v \vert}}{\partial \bsz _v} h(\bsz_v;1)\right\vert \leq D_\pi(\bsw,P_n)$
	we have
	$$
	|e(f,P_n)| = |e(\widetilde{f},P_n)|  \leq \Vert \widetilde{f} \Vert_{H_1}  D_{\pi}(\bsw,P_n) = \Vert f\Vert_{\widetilde{H}_1} D_{\pi}(\bsw, P_n),
	$$
	which finishes the proof.
\end{proof}
An immediate consequence of the theorem with $\bsw^u$ from \eqref{eq: weight_u} and $Q_n$
from \eqref{eq: I_n} is the error bound
%
%
\begin{equation*}
\left|
S(f,u)
- 
Q_n(f,u)
\right| 
\le \|f\|_{H_1} \; D_{\pi}(\bsw^u,P_n).
\end{equation*}
Here the dependence on $u$ on the right-hand side is hidden in $D_{\pi}(\bsw^u,P_n)$ through $\bsw^u$ and $\pi$.
The intuition is, that under suitable assumptions on $u$ the weighted star-discrepancy
can be bounded by the classical star-discrepancy of $P_n$.

\subsection{Weighted and classical Star-discrepancy}
In this section we provide a relation between the classical star-discrepancy $D_{\lambda_d}(P_n)$
and the weighted star-discrepancy $D_{\pi}(\bsw^u,P_n)$.

\begin{theorem}\label{weightedDiscr} 
	Let $\pi$ be a probability measure of the form \eqref{eq: pi} with unnormalized 
	density function $u \colon [0,1]^d \to [0,\infty)$.
	Then, for any point set $P_n=\{\bsx_1,\ldots,\bsx_{n}\}$ in $[0,1]^d$,
	we have
	\begin{equation*}
	D_{\pi}(\bsw^u,P_n) \le 4 D_{\lambda_d}(P_n) 
	\frac{\left \Vert u \right \Vert_{D}}{\int_{[0,1]^d} u(\bsx)\rd \bsx},
	\end{equation*} 
	where 
	\begin{equation}  \label{eq: u_D}
	\left \Vert u \right \Vert_{D} 
	= \sup_{\bsz \in [0,1]^d} u(\bsz) +  \sup_{\bsz \in [0,1]^d}  \left \Vert  u(T_{\bsz}\, \cdot) \right \Vert_{\widetilde{H}_1} 
	\end{equation}
	with $T_{\bsz}\colon [0,1]^d \to [0,\bsz]$ 
	and $T_{\bsz}(x_1,\dots,x_d) = (z_1 x_1,\dots,z_d x_d)$ for $\bsz\in [0,1]^d$.
\end{theorem}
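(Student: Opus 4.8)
The plan is to reduce the statement to a single uniform estimate on how well the point set $P_n$ approximates integrals of $u$ over anchored boxes, and then to remove the unknown normalisation by a short case distinction.

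First I would set $U := \int_{[0,1]^d} u(\bsx)\,\rd\bsx$, $u_n := \tfrac1n\sum_{j=1}^n u(\bsx_j)$ and, for $\bsx\in[0,1]^d$, $a_n(\bsx) := \tfrac1n\sum_{i=1}^n u(\bsx_i)\mathbf{1}_{[0,\bsx)}(\bsx_i)$ and $a(\bsx) := \int_{[0,\bsx)} u(\bsy)\,\rd\bsy$, so that $\sum_i w^u_i\mathbf{1}_{[0,\bsx)}(\bsx_i) = a_n(\bsx)/u_n$ and $\pi([0,\bsx)) = a(\bsx)/U$. From
\[
\frac{a_n(\bsx)}{u_n} - \frac{a(\bsx)}{U}
= \frac{\bigl(a_n(\bsx)-a(\bsx)\bigr)U + a(\bsx)\bigl(U-u_n\bigr)}{u_n\,U}
\]
and $0\le a(\bsx)\le U$ one gets $D_{\pi}(\bsw^u,P_n) \le \bigl(\sup_{\bsx}|a_n(\bsx)-a(\bsx)| + |U-u_n|\bigr)/u_n$. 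Since (with the usual convention $P_n\subset[0,1)^d$) one has $|U-u_n| = |a(\mathbf{1})-a_n(\mathbf{1})|$, the whole theorem reduces to the integration estimate
\[
\sup_{\bsx\in[0,1]^d}\bigl|a_n(\bsx)-a(\bsx)\bigr| \le \Vert u\Vert_D\, D_{\lambda_d}(P_n). \qquad (\star)
\]

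The heart of the argument, and the step I expect to be the main obstacle, is $(\star)$. I would fix $\bsx$ with strictly positive coordinates (otherwise both terms vanish), so that $T_{\bsx}$ is a bijection of $[0,1]^d$ onto $[0,\bsx]$ with $T_{\bsx}^{-1}\bsy=(y_1/x_1,\dots,y_d/x_d)$, put $g_{\bsx}:=u(T_{\bsx}\,\cdot)$ (which lies in $H_1$ since $\Vert u\Vert_D<\infty$), and let $\mu_{\bsx} := \tfrac1n\sum_{i:\,\bsx_i\in[0,\bsx)}\delta_{T_{\bsx}^{-1}\bsx_i}$. The substitution $\bsy=T_{\bsx}\bsz$ turns the two quantities into integrals of the single function $g_{\bsx}$, giving
\[
a_n(\bsx)-a(\bsx) = \int_{[0,1]^d} g_{\bsx}\,\rd\nu_{\bsx}, \qquad \nu_{\bsx}:=\mu_{\bsx}-\lambda_d([0,\bsx))\,\lambda_d .
\]
The decisive observation is the multiplicativity $\lambda_d([0,T_{\bsx}\bsz))=\lambda_d([0,\bsx))\,\lambda_d([0,\bsz))$: it implies that for every $v\subseteq[d]$ and $\bsz\in[0,1]^d$ the number $\nu_{\bsx}(\{\bsy:y_j\le z_j\ \forall j\in v\})$ equals, up to a Lebesgue-null set, $\tfrac1n\#\{i:\bsx_i\in B\}-\lambda_d(B)$ for the anchored box $B=\prod_{j\in v}[0,z_jx_j)\times\prod_{j\notin v}[0,x_j)$, and is therefore at most $D_{\lambda_d}(P_n)$ in absolute value; in particular $|\nu_{\bsx}([0,1]^d)|\le D_{\lambda_d}(P_n)$. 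Then I would insert the representation \eqref{eq: repr_prop} for $g_{\bsx}$ into $\int g_{\bsx}\,\rd\nu_{\bsx}$ and interchange integrals exactly as in the proof of Theorem~\ref{thm_int_error}: the term $v=\emptyset$ contributes $g_{\bsx}(\mathbf{1})\,\nu_{\bsx}([0,1]^d)=u(\bsx)\,\nu_{\bsx}([0,1]^d)$, bounded by $(\sup_{\bsz}u(\bsz))\,D_{\lambda_d}(P_n)$, while the remaining terms are bounded by $\Vert g_{\bsx}\Vert_{\widetilde{H}_1}\,D_{\lambda_d}(P_n)\le(\sup_{\bsz}\Vert u(T_{\bsz}\,\cdot)\Vert_{\widetilde{H}_1})\,D_{\lambda_d}(P_n)$. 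Adding the two bounds gives exactly $\Vert u\Vert_D\,D_{\lambda_d}(P_n)$, i.e. $(\star)$. It is precisely the mismatch of total masses between $\mu_{\bsx}$ and $\lambda_d([0,\bsx))\lambda_d$ that forces $\Vert u\Vert_D$ to carry both the supremum term and the seminorm term.

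Finally I would clear $u_n$ from the denominator by a dichotomy. Combining $(\star)$ with the reduction yields $D_{\pi}(\bsw^u,P_n)\le 2\Vert u\Vert_D D_{\lambda_d}(P_n)/u_n$. If $\Vert u\Vert_D D_{\lambda_d}(P_n)<U/2$, then $(\star)$ at $\bsx=\mathbf{1}$ gives $u_n\ge U-\Vert u\Vert_D D_{\lambda_d}(P_n)>U/2$, hence $D_{\pi}(\bsw^u,P_n)\le 4\Vert u\Vert_D D_{\lambda_d}(P_n)/U$. Otherwise $4\Vert u\Vert_D D_{\lambda_d}(P_n)/U\ge 2$, while $D_{\pi}(\bsw^u,P_n)\le 1$ because $\sum_i w^u_i\mathbf{1}_{[0,\bsx)}(\bsx_i)\in[0,1]$ (the weights are nonnegative and sum to $1$) and $\pi([0,\bsx))\in[0,1]$. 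In either case the asserted bound follows.
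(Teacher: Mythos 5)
Your proposal is correct, but it takes a genuinely different route from the paper in both of its main steps. In the reduction step, the paper uses the splitting $\frac{A}{B}-\frac{C}{U}=\frac{A}{BU}\left(U-\frac{B}{n}\right)+\frac{1}{U}\left(\frac{A}{n}-C\right)$, which keeps only the true normalising constant $\Vert u\Vert_1$ in the denominator; your decomposition leaves the empirical mean $u_n$ there, which is why you need the closing dichotomy ($u_n>U/2$ versus the bound being trivially at least $2\ge D_\pi$) -- a valid but extra step the paper avoids. In the core estimate, the paper fixes $\bsz$, splits the box-wise error into $I_1(\bsz)+I_2(\bsz)$, applies Theorem~\ref{thm_int_error} (with uniform weights) to the rescaled point set $Q=T_{\bsz}^{-1}P^{\bsz}$, and then separately shows $\frac{\vert P^{\bsz}\vert}{n}D_{\lambda_d}(Q)\le 2D_{\lambda_d}(P_n)$; you instead run the Hlawka--Zaremba identity once against the single signed measure $\nu_{\bsx}=\mu_{\bsx}-\lambda_d([0,\bsx))\lambda_d$, using the multiplicativity $\lambda_d([0,T_{\bsx}\bsz))=\lambda_d([0,\bsx))\lambda_d([0,\bsz))$ to control all lower-set values of $\nu_{\bsx}$ by $D_{\lambda_d}(P_n)$ and absorbing the mass defect into the $v=\emptyset$ term. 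This buys you the sharper intermediate bound $\bigl(\sup u+\sup_{\bsz}\Vert u(T_{\bsz}\,\cdot)\Vert_{\widetilde H_1}\bigr)D_{\lambda_d}(P_n)$ in place of the paper's $\bigl(\sup u+2\sup_{\bsz}\Vert u(T_{\bsz}\,\cdot)\Vert_{\widetilde H_1}\bigr)D_{\lambda_d}(P_n)$, though the factor you lose in the dichotomy brings the final constant back to $4$. The only points needing a word of care are the half-open versus closed box boundary issues (your ``up to a Lebesgue-null set'' really concerns the counting measure, and is resolved by the standard fact that the star-discrepancy over closed anchored boxes is bounded by that over half-open ones) and the Fubini interchange against the finite signed measure $\nu_{\bsx}$; both are routine and the paper glosses over analogous points.
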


\begin{proof} 
	For the given point set $P_n\subset [0,1]^d$ and unnormalized density $u$
	recall that $\bsw^u$ is defined in \eqref{eq: weight_u}. 
	To shorten the notation define $\Vert u \Vert_1 := \int_{[0,1]^d} u(\bsy) \rd \bsy$.
	Then, for $\bsz\in [0,1]^d$ we have
	\begin{align*}
	& \left \vert \sum_{j=1}^n w^u_j\mathbf{1}_{[0,\bsz)}(\bsx_j) -\pi([0,\bsz)) \right \vert 
	= \left \vert  \frac{\sum_{j=1}^n u(\bsx_j)\mathbf{1}_{[0,\bsz)}(\bsx_j)}{\sum_{i=1}^n u(\bsx_i)} 
	-\frac{\int_{[0,\bsz)} u(\bsx) \rd \bsx}{ \Vert u \Vert_1} \right \vert\\
	& \quad\leq   \frac{\sum_{j=1}^n u(\bsx_j)\mathbf{1}_{[0,\bsz)}(\bsx_j)}
	{\Vert u \Vert_1 \sum_{i=1}^n u(\bsx_i)} 
	\left \vert \Vert u \Vert_1 - \frac{1}{n} \sum_{i=1}^n u(\bsx_i) \right \vert\\
	& \qquad\qquad\qquad + 
	\frac{1}{\Vert u \Vert_1} 
	\left \vert \frac{1}{n} \sum_{i=1}^n u(\bsx_i) \mathbf{1}_{[0,\bsz)}(\bsx_i)
	- \int_{[0,\bsz)} u(\bsx) \rd \bsx
	\right \vert \\
	&\quad \leq  \frac{2}{\Vert u \Vert_1} \sup_{z\in [0,1]^d}
	\left \vert \frac{1}{n} \sum_{i=1}^n u(\bsx_i) \mathbf{1}_{[0,\bsz)}(\bsx_i)
	- \int_{[0,\bsz)} u(\bsx) \rd \bsx \right \vert.
	\end{align*}
	For $\bsz\in [0,1]^d$ denote $P^{\bsz}=P_n\cap [0,\bsz)$ 
	and let $\left \vert P^{\bsz} \right \vert$ be 
	the cardinality of $P^{\bsz}$. Define
	\begin{align*}
	I_1(\bsz) & := \frac{\int_{[0,\bsz)}u(\bsx) \rd \bsx}{\lambda_d([0,\bsz))}
	\left \vert \frac{\vert P^{\bsz} \vert}{n} - \lambda_d([0,\bsz))  \right \vert, \\
	I_2(\bsz) & := \frac{\vert P^{\bsz} \vert}{n} 
	\left \vert \frac{1}{\vert P^{\bsz} \vert} \sum_{\bsx \in P^{\bsz}} u(x) - 
	\frac{\int_{[0,\bsz)} u(\bsx) \rd \bsx  }{\lambda_d([0,\bsz))}\right \vert,
	\end{align*}
	and note that
	\begin{align*}
	&  \left \vert \frac{1}{n} \sum_{i=1}^n u(\bsx_i) \mathbf{1}_{[0,\bsz)}(\bsx_i)
	- \int_{[0,\bsz)} u(\bsx) \rd \bsx \right \vert\\
	& = \frac{\vert P^{\bsz} \vert}{n} 
	\left\vert 
	\frac{1}{\vert P^{\bsz}  \vert}\sum_{x\in P^{\bsz}} u(x) 
	- \frac{n}{\vert P^{\bsz} \vert}\int_{[0,\bsz)}u(\bsx) \rd \bsx
	\right\vert 
	\leq I_1(\bsz) + I_2(\bsz).
	\end{align*}
	{\bf Estimation of $I_1(\bsz)$:}
	An immediate consequence of the 
	definition of $I_1(\bsz)$ is
	\begin{equation}\label{I_1}
	I_1(\bsz) \leq \frac{\int_{[0,\bsz]} u(\bsx) \rd \bsx}{\lambda_d([0,\bsz])}\;
	D_{\lambda_d}(P_n) \leq D_{\lambda_d}(P_n) \sup_{\bsx \in [0, \bsz]} u(\bsx).
	\end{equation}
	{\bf Estimation of $I_2(\bsz)$:}
	With the transformation $T_{\bsz}\colon [0,1]^d \to [0,\bsz]$ defined in 
	the theorem one has
	$
	\frac{\int_{[0,\bsz]} u(\bsx) \rd \bsx}{\lambda_d([0,\bsz])}
	= \int_{[0,1]^d} u(T_{\bsz}\bsx)\, \rd \bsx.
	$
	Let
	\[
	Q:=T^{-1}_{\bsz}\, P^{\bsz} = \{ (z_1^{-1} x_1 ,\dots,z_d^{-1} x_d )\mid \bsx\in P^{\bsz} \} 
	\subset [0,1]^d
	\]
	and observe that $\vert P^{\bsz} \vert = \vert Q \vert$.
	Then
	\begin{align*}
	I_2(\bsz) = \frac{|P^{\bsz}|}{n}\left\vert \frac{1}{|Q|}\sum_{ x \in Q} u(T_{\bsz} \, \bsx)
	-\int_{[0,1]^d} u(T_{\bsz}\,\bsx)\,\rd \bsx\right| 
	\leq \frac{\left \vert P^{\bsz} \right \vert}{n} D_{\lambda_d}(Q) 
	\left \Vert u(T_{\bsz}\,\cdot) \right \Vert_{H_1},
	\end{align*} 
	where the last inequality follows from Theorem~\ref{thm_int_error}
	with $\bsw = (1/n,\dots,1/n)$ and constant unnormalized density.
	Further,
	\begin{align*}
	\frac{\vert P^{\bsz}\vert}{n} D_{\lambda_d}(Q)
	& = \frac{\vert P^{\bsz}\vert}{n} \sup_{\bsy \in [0,1]^d}
	\left \vert \frac{1}{\vert Q \vert} \sum_{x\in Q} \mathbf{1}_{[0,\bsy)}(x) - \lambda_d([0,\bsy)) \right \vert\\
	& = \sup_{\bsy \in [0,1]^d}
	\left \vert \frac{1}{n} \sum_{x\in Q} \mathbf{1}_{[0,\bsy)}(x) - \frac{\vert Q \vert}{n}\lambda_d([0,\bsy)) \right \vert\\
	& = \sup_{\bsy \in [0,1]^d}
	\left \vert \frac{1}{n} \sum_{x\in P_n} \mathbf{1}_{T_{\bsz}([0,\bsy))}(x) - \frac{\vert Q \vert}{n}\lambda_d([0,\bsy)) \right \vert\\
	& \leq \sup_{\bsy \in [0,1]^d}
	\left \vert \frac{1}{n} \sum_{x\in P_n} 
	\mathbf{1}_{T_{\bsz}([0,\bsy))}(x) - \lambda_d(T_{\bsz}([0,\bsy))) \right \vert\\
	& \qquad \qquad + \sup_{\bsy \in [0,1]^d}
	\left \vert \lambda_d(T_{\bsz}([0,\bsy)))
	- \frac{\vert Q \vert}{n}\lambda_d([0,\bsy)) \right \vert.
	\end{align*}
	By the fact that $T_{\bsz}([0,\bsy))$ is again a box anchored at $0$ and
	\begin{align*}
	\sup_{\bsy \in [0,1]^d}
	\left \vert \lambda_d(T_{\bsz}([0,\bsy)))
	- \frac{\vert Q \vert}{n}\lambda_d([0,\bsy)) \right \vert
	= & \sup_{\bsy \in [0,1]^d} \lambda_d([0,\bsy))
	\left \vert \lambda_d([0,\bsz))
	- \frac{\vert P^{\bsz} \vert}{n} \right \vert \\ \le & \left \vert \lambda_d([0,\bsz))
	- \frac{\vert P^{\bsz} \vert}{n} \right \vert,
	\end{align*}
	we have 
	\[
	I_2(\bsz) \le 2 \left \Vert u(T_{\bsz}\,\cdot) \right \Vert_{\widetilde{H}_1}  \left \vert \lambda_d([0,\bsz))
	- \frac{\vert P^{\bsz} \vert}{n} \right \vert \le 2  \left \Vert u(T_{\bsz}\,\cdot) \right \Vert_{\widetilde{H}_1} D_{\lambda_d}(P_n).
	\]
	
	Hence we have
	\begin{align*}
	& \sup_{\bsz \in [0,1]^d} \left \vert \sum_{j=1}^n w^u_j\mathbf{1}_{[0,\bsz)}(\bsx_j) -\pi([0,\bsz)) \right \vert
	\le   2 \sup_{\bsz \in [0,1]^d} I_1(\bsz) + I_2(\bsz) \\ 
	\le & 2 D_{\lambda_d}(P_n) \sup_{\bsz \in [0,1]^d}  \left(\sup_{\bsx \in [0,\bsz]} u(x) +   2 \left \Vert u(T_{\bsz}\,\cdot) \right \Vert_{\widetilde{H}_1} \right),
	\end{align*}
	which implies the result.
\end{proof}

In particular, the theorem implies that whenever $\Vert u \Vert_D $ is finite and
$D_{\lambda_d}(P_n)$ goes to zero as $n$ 
goes to infinity, also $D_{\pi}(\bsw^u,P_n)$ goes to zero for increasing $n$ with the same rate of
convergence.

\subsection{Explicit error bound}

An immediate consequence of the results of the previous two sections is the following explicit error bound
of the quasi-Monte Carlo importance sampling method of Algorithm~\ref{qmc}.

\begin{theorem}  \label{thm: main}
	Let $\pi$ be a 
	probability measure of the form \eqref{eq: pi} with unnormalized density $u\colon [0,1]^d \to [0,\infty)$.
	Then, for any point set $P_n=\{\bsx_1,\ldots,\bsx_{n}\}$ in $[0,1]^d$, $f\in H_1$ and $Q_n$ from \eqref{eq: I_n} 
	we obtain
	\[
	\vert S(f,u) - Q_n(f,u) \vert \leq 
	4 \frac{\Vert f \Vert_{H_1} \Vert u \Vert_D}{\int_{[0,1]^d} u(\bsx)\rd \bsx} \; D_{\lambda_d}(P_n),
	\]
	with $\Vert u \Vert_D$ from Theorem~\ref{weightedDiscr}.
\end{theorem}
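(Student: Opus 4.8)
The plan is to chain together the two auxiliary results already proved in the previous subsections, so the argument is essentially bookkeeping. First I would record that the weight vector $\bsw^u$ from \eqref{eq: weight_u} is an admissible weight vector in the sense of Theorem~\ref{thm_int_error}: as soon as the point set satisfies the mild non-degeneracy condition $\sum_{j=1}^n u(\bsx_j)>0$ — which is in any case implicit in \eqref{eq: I_n}, since otherwise $Q_n(f,u)$ is not even defined — the entries $w^u_i$ are well defined and sum to one. With this choice one has $\sum_{i=1}^n w^u_i f(\bsx_i) = Q_n(f,u)$ exactly.

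Next I would apply Theorem~\ref{thm_int_error} with $\bsw=\bsw^u$ to obtain
\[
\left| S(f,u) - Q_n(f,u) \right| \le \|f\|_{\widetilde{H}_1}\, D_{\pi}(\bsw^u,P_n),
\]
and then use the trivial inequality $\|f\|_{\widetilde{H}_1}\le\|f\|_{H_1}$ noted right after \eqref{eq: H_1_seminorm} to replace the semi-norm by the norm. Finally, invoking Theorem~\ref{weightedDiscr} to pass from the weighted star-discrepancy to the classical one gives
\[
D_{\pi}(\bsw^u,P_n) \le 4\, D_{\lambda_d}(P_n)\, \frac{\|u\|_D}{\int_{[0,1]^d} u(\bsx)\,\rd\bsx},
\]
and combining the last two displays yields the asserted bound, with $\|u\|_D$ as in Theorem~\ref{weightedDiscr}.

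I do not expect a genuine obstacle here, since both ingredient inequalities are already in hand; the only point worth a sentence is the well-definedness issue, namely that $\sum_{j=1}^n u(\bsx_j)$ must not vanish for $\bsw^u$ and $Q_n$ to make sense (and the statement is vacuous otherwise). It is also worth a quick sanity check that Theorem~\ref{weightedDiscr} carries the same implicit hypothesis — its proof likewise divides by $\sum_i u(\bsx_i)$ — so nothing new is assumed at this final step.
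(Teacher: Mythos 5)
Your proposal is correct and is exactly the argument the paper intends: Theorem~\ref{thm_int_error} with $\bsw=\bsw^u$, the inequality $\|f\|_{\widetilde{H}_1}\le\|f\|_{H_1}$, and Theorem~\ref{weightedDiscr} chained together. The remark on the non-vanishing of $\sum_j u(\bsx_j)$ is a sensible (and correct) addition that the paper leaves implicit.
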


Under the regularity assumption that $\Vert u \Vert_D$ is finite, the error bound tells us that the
classical star-discrepancy determines the rate of convergence on how fast $Q_n(f,u)$ 
goes to $S(f,u)$.

\section{Illustrating Example} \label{sec: Ill_ex}

Define the $d$-simplex by
$
\Delta_d:= \left\{\bsx \in [0,1]^d \colon \sum_{i=1}^d x_i \leq 1 \right\}
$
and consider the (slightly differently formulated) unnormalized density $u\colon [0,1]^d \to [0,1)$ of 
the Dirichlet distribution with parameter vector $\bsalpha \in (1,\infty)^{d+1}$
given by
\begin{equation}
\label{eq: dirichlet_density}
u(\bsx;\bsalpha) = \begin{cases}
(1-\sum_{i=1}^{d} x_i)^{\alpha_{d+1}-1}\prod_{i=1}^d x_i^{\alpha_i-1},   
& x\in \Delta_d,\\
0, & x\not \in \Delta_d.
\end{cases}
\end{equation}
The Dirichlet distribution is the conjugate prior of the multinomial distribution:
Assume that we observed some data $\bsy = (y_1,\dots,y_{d+1}) \in [0,\infty)^{d+1}$, which
we model as a realization of a multinomial distributed 
random variable with unknown parameter vector  $\bsx = (x_1,\dots,x_d)\in [0,1]^d$.
With $n\in\mathbb{N}$ this leads to a likelihood function
$
\ell(\bsy \mid \bsx) = \frac{n!}{ y_1! \cdots y_{d+1}!} (1-\sum_{i=1}^d x_i)^{y_{d+1}} \prod_{i=1}^d x_i^{y_i}.
$
For a prior distribution 
with unnormalized density $u(\bsx,\bsbeta)$ and  
$\bsbeta\in (1,\infty)^{d+1}$ 
we obtain a posterior measure with unnormalized density $u(\bsx,\bsbeta+\bsy)$.

The normalizing constant of $u$ can be computed explicitly, it is known that
\begin{equation} \label{eq: norm_const_dirich}
\int_{[0,1]^d} u(\bsx,\bsalpha)\rd \bsx
= \frac{\prod_{i=1}^{d+1} \Gamma(\alpha_i)}{\Gamma(\sum_{i=1}^{d+1}\alpha_i)}.
\end{equation}
To have a feasible setting for the application of 
Theorem~\ref{thm_int_error} and Theorem~\ref{weightedDiscr} we need to 
show that $\Vert u \Vert_D$ is finite. 
This is not immediately clear, since in $\Vert u \Vert_D$
we take the supremum over $\bsz\in [0,1]^d$.
The following lemma is useful.
\begin{lemma}
	Let $v\subseteq [d]$ and recall that we write $k_v = (k_i)_{i\in v}$.
	Define $(k_v;0;k_{d+1}) = (r_1,\dots,r_{d+1})$ with
	$r_j =  k_j$ if  $j\in v$, $r_j=0$ if $j\not\in v$ and $r_j=k_{d+1}$ if $j=d+1$.
	Assume that $\alpha_i\geq 2$ for $1\leq i\leq d$ and $\alpha_{d+1}\geq d$ .
	Then
	\begin{align}
	\label{al: partial_der}
	\frac{\partial^{\vert v \vert}}{\partial x_v} u(\bsx,\bsalpha) 
	= \sum_{\underset{ k_{d+1} = \vert v \vert - \sum_{i\in v} k_i}{k_v\in \{0,1\}^{\vert v \vert}}}
	c_{v,k_v,k_{d+1}}\;
	u(\bsx,\bsalpha-(k_v;0;k_{d+1}))
	\end{align}
	with
	$
	c_{v,k_v,k_{d+1}} =  (-1)^{k_{d+1}} 
	\prod_{j=1}^{k_{d+1}} (\alpha_{d+1}-j)
	\prod_{i\in v} (\alpha_i-1)^{k_i} .
	$
\end{lemma}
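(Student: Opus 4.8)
The plan is to derive \eqref{al: partial_der} from the ordinary product rule. On the interior $\Delta_d^{\circ}$ of the simplex the density factorises as $u(\bsx,\bsalpha)=w(\bsx)\,g(\bsx)$, where $w(\bsx)=(1-\sum_{i=1}^{d}x_i)^{\alpha_{d+1}-1}$ and $g(\bsx)=\prod_{i=1}^{d}x_i^{\alpha_i-1}$, and both factors are $C^{\infty}$ there. Since in the mixed derivative $\frac{\partial^{|v|}}{\partial\bsx_v}$ each coordinate is differentiated at most once, the Leibniz rule reads
\[
\frac{\partial^{|v|}}{\partial\bsx_v}\big(w\,g\big)=\sum_{A\subseteq v}\Big(\frac{\partial^{|A|}}{\partial\bsx_A}w\Big)\Big(\frac{\partial^{|v\setminus A|}}{\partial\bsx_{v\setminus A}}g\Big),
\]
which is proved by an immediate induction on $|v|$. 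Hence it is enough to compute the two factors separately.

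Using $\partial(1-\sum_i x_i)/\partial x_j\equiv-1$, an $|A|$-fold differentiation of $w$ gives $\frac{\partial^{|A|}}{\partial\bsx_A}w(\bsx)=(-1)^{|A|}\big(\prod_{j=1}^{|A|}(\alpha_{d+1}-j)\big)\,(1-\sum_i x_i)^{\alpha_{d+1}-1-|A|}$, and, because $g$ is a product of one-variable powers, $\frac{\partial^{|v\setminus A|}}{\partial\bsx_{v\setminus A}}g(\bsx)=\big(\prod_{i\in v\setminus A}(\alpha_i-1)\big)\prod_{i=1}^{d}x_i^{\alpha_i-1-k_i}$, where I set $k_i:=1$ for $i\in v\setminus A$ and $k_i:=0$ otherwise. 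Putting $k_{d+1}:=|A|$ yields $\sum_{i\in v}k_i=|v\setminus A|=|v|-k_{d+1}$, i.e.\ the constraint $k_{d+1}=|v|-\sum_{i\in v}k_i$, and the product of the two expressions above equals precisely $c_{v,k_v,k_{d+1}}\,u\big(\bsx,\bsalpha-(k_v;0;k_{d+1})\big)$ with the coefficient $c_{v,k_v,k_{d+1}}$ defined in the lemma. As $A$ ranges over the subsets of $v$, the tuple $(k_v,k_{d+1})$ ranges exactly once over all $k_v\in\{0,1\}^{|v|}$ with $k_{d+1}$ fixed by the constraint, so summing the Leibniz identity over $A\subseteq v$ gives \eqref{al: partial_der}. (Alternatively one may induct on $|v|$ using the single-step reduction $\frac{\partial}{\partial x_j}u(\bsx,\bsgamma)=(\gamma_j-1)\,u(\bsx,\bsgamma-\bse_j)-(\gamma_{d+1}-1)\,u(\bsx,\bsgamma-\bse_{d+1})$ for $j\in[d]$, which is a one-line computation valid on $\Delta_d^{\circ}$ and which propagates the coefficients correctly.)

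The one point I would treat carefully is the set on which \eqref{al: partial_der} is claimed. The computation above is valid on $\Delta_d^{\circ}$; on the open set $[0,1]^d\setminus\overline{\Delta_d}$ both sides vanish identically; and the leftover set $\partial\Delta_d$ is Lebesgue-null, which suffices for the later use of \eqref{al: partial_der} inside the integrals defining $\|u(T_{\bsz}\,\cdot)\|_{\widetilde{H}_1}$. The hypotheses $\alpha_i\ge2$ for $1\le i\le d$ and $\alpha_{d+1}\ge d$ are exactly what keeps the exponents on the right-hand side admissible: since $k_i\in\{0,1\}$ and $k_{d+1}\le|v|\le d$, one checks $\alpha_i-1-k_i\ge0$ and $\alpha_{d+1}-1-k_{d+1}\ge-1$ for every summand, so that each term of \eqref{al: partial_der} is a well-defined function on $[0,1]^d$. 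I expect this bookkeeping — not the differentiation itself, which is mechanical — to be the only part that needs attention.
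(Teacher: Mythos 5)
Your argument is correct, but it takes a different route from the paper. The paper proves \eqref{al: partial_der} by induction on $\vert v\vert$: it establishes the single-variable step $\frac{\partial}{\partial x_s}u(\bsx,\bsalpha)=(\alpha_s-1)u(\bsx,\bsalpha-\bse_s)-(\alpha_{d+1}-1)u(\bsx,\bsalpha-\bse_{d+1})$ and then verifies, through a somewhat fiddly computation, that applying $\frac{\partial}{\partial x_r}$ to each summand $c_{v,k_v,k_{d+1}}\,u(\bsx,\bsalpha-(k_v;0;k_{d+1}))$ reproduces coefficients of the form $c_{\widetilde v,k_{\widetilde v},\bar k_{d+1}}$ with $\bar k_{d+1}=k_{d+1}+1-k_r$ for the enlarged set $\widetilde v=v\cup\{r\}$. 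You instead factor $u=w\,g$ and invoke the Leibniz rule for mixed first-order partials, indexing the terms directly by subsets $A\subseteq v$; the bijection $A\mapsto k_v$ with $k_i=\mathbf{1}_{v\setminus A}(i)$ and $k_{d+1}=\vert A\vert$ then identifies each Leibniz term with a summand of \eqref{al: partial_der} in closed form. What your route buys is that the coefficient $c_{v,k_v,k_{d+1}}$ appears in one step as a product of the two separately computed derivatives, so no coefficient-propagation bookkeeping is needed; the paper's induction is more mechanical but self-contained (it does not presuppose the subset form of the Leibniz rule, which you would anyway prove by the same induction). Your closing remark about the domain is a worthwhile addition the paper omits: the identity is an identity of classical derivatives only on $\Delta_d^{\circ}$ and on the complement of $\overline{\Delta_d}$, with $\partial\Delta_d$ a null set. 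One small caveat: your claim that $\alpha_{d+1}-1-k_{d+1}\ge-1$ makes every term ``well-defined on $[0,1]^d$'' is too generous, since an exponent of $-1$ on $(1-\sum_i x_i)$ blows up on the face $\sum_i x_i=1$; this does not affect the identity on the interior, but it is the same borderline issue the paper itself skirts when it later asserts $u(\cdot,\bsalpha-(k_v;0;k_{d+1}))\le 1$.
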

\begin{proof}
	The statement follows by induction over the cardinality of $v$. 
	For $\vert v \vert = 0$, i.e., $v=\emptyset$ both sides of \eqref{al: partial_der}
	are equal to
	$u(\bsx,\bsalpha)$.
	
	Assume $\vert v \vert =1$, i.e., for some $s\in [d]$ we have $v=\{s\}$. 
	Then
	\begin{equation*}
	\frac{\partial}{\partial x_s} u(\bsx,\bsalpha)
	= (\alpha_s-1)u(\bsx,\bsalpha - \bse_s) - (\alpha_{d+1}-1) u(\bsx,\bsalpha-\bse_{d+1}),
	\end{equation*}
	with $\bse_i=(0,\dots,0,1,0,\dots,0)\in \mathbb{R}^{d+1}$ where the $i$th entry is ``1''.
	On the other hand 
	\begin{align*}
	\sum_{\underset{k_{d+1}=1-k_s}{k_s\in \{0,1\}}}
	& c_{\{s\},k_s,k_{d+1}} u(\bsx,\bsalpha-(k_s;0;k_{d+1}))\\
	& =  c_{\{s\},0,1} u(\bsx,  \bsalpha-\bse_{d+1}) 
	+ c_{\{s\},1,0} u(\bsx,\bsalpha-\bse_s).
	\end{align*}
	By the fact that $c_{\{s\},0,1}=-(\alpha_{d+1}-1)$ and $c_{\{s\},1,0}=(\alpha_s-1)$ 
	the claim is proven for $\vert v \vert=1$.
	
	Now assume that \eqref{al: partial_der} is true for any $ v \subseteq [d]$
	with $\vert v \vert \le \ell <d$. Let  $v \subseteq [d]$ with $\vert v \vert = \ell$ be an arbitrary subset  and let $r\in [d]$ with $r\not \in v$. Then we prove that the result also holds
	for $\widetilde v = v \cup \{r\}$.
	We have
	\begin{align*}
	\frac{\partial^{\vert \widetilde v \vert}}{\partial x_{\widetilde v}} u(\bsx,\bsalpha)
	& = \frac{\partial}{\partial x_{r}} \frac{\partial^{\vert v \vert}}{\partial x_{v}} 
	u(\bsx,\bsalpha) \\
	&  = \sum_{\underset{k_{d+1}=\vert v \vert - \sum_{i\in v}k_i}{k_v\in \{0,1\}^{\vert v \vert}}} 
	c_{v,k_{v},k_{d+1}} \frac{\partial }{\partial x_r} u(\bsx,\bsalpha-(k_v;0;k_{d+1})).
	\end{align*}
	Observe that
	\begin{align*}
	&\frac{\partial }{\partial x_r} u(\bsx,\bsalpha-(k_v;0;k_{d+1}))\\
	&= (\alpha_r-1)u(\bsx,\bsalpha-(k_v;0;k_{d+1})-\bse_r)
	-(\alpha_{d+1}-k_{d+1}-1)u(\bsx,\bsalpha-(k_v;0;k_{d+1}+1))\\
	&=\sum_{k_r\in \{0,1\}} (\alpha_r-1)^{k_r}(-1)^{1-k_r}(\alpha_{d+1}-k_{d+1}-1)^{1-k_r}
	u(\bsx,\bsalpha-(k_v;0;\bar k_{d+1})-k_r\bse_{r}),
	\end{align*}
	where $\bar k_{d+1} := k_{d+1}+1-k_r$.
	Further, note that
	\begin{align*}
	& c_{v,k_v,k_{d+1}} (\alpha_r-1)^{k_r} (-1)^{1-k_r} (\alpha_{d+1}-k_{d+1}-1)^{1-k_r}\\
	= & (-1)^{k_{d+1}+1-k_r} 
	\prod_{j=1}^{k_{d+1}} (\alpha_{d+1}-j)(\alpha_{d+1}-k_{d+1}-1)^{1-k_r}
	\prod_{i\in v} (\alpha_i-1)^{k_i}(\alpha_r-1)^{k_r}\\
	= & (-1)^{\bar k_{d+1}} \prod_{j=1}^{\bar k_{d+1}} (\alpha_{d+1}-j) 
	\prod_{i\in \widetilde v}(\alpha_i-1)^{k_i} 
	= c_{\widetilde v, k_{\widetilde v},\bar k_{d+1}}.	  
	\end{align*}
	Hence, by using $\bar k_{d+1} := k_{d+1}+1-k_r$ we obtain
	\begin{align*}
	\frac{\partial^{\vert \widetilde v \vert}}{\partial x_{\widetilde v}} u(\bsx,\bsalpha)
	& = 
	\sum_{\underset{k_{d+1}=\vert v \vert - \sum_{i\in v}k_i}{k_v\in \{0,1\}^{\vert v \vert}}} 
	\sum_{k_r\in\{0,1\}}
	c_{\widetilde v,k_{\widetilde v},\bar k_{d+1}} 
	u(\bsx,\bsalpha-(k_v;0;\bar k_{d+1})-k_r\bse_{r})\\
	& = \sum_{\underset{\bar k_{d+1}=\vert \widetilde v \vert - \sum_{i\in \widetilde v}k_i}
		{k_{\widetilde v}\in \{0,1\}^{\vert \widetilde v \vert}}} 
	c_{\widetilde v,k_{\widetilde v},\bar k_{d+1}} 
	u(\bsx,\bsalpha-(k_{\widetilde v};0;\bar k_{d+1})),
	\end{align*}
	and the proof is finished.
\end{proof}

An immediate consequence of the previous lemma and a chain rule argument we have
for arbitrary $v\subseteq [d]$, $\bsz\in [0,1]^d$ 
and $T_{\bsz}$, defined as in Theorem~\ref{weightedDiscr}, that
\[
\frac{\partial^{\vert v \vert}}{\partial x_v} u(T_{\bsz}\bsx,\bsalpha) 
= \prod_{i\in v} z_i  
\sum_{\underset{ k_{d+1} = \vert v \vert - \sum_{i\in v} k_i}{k_v\in \{0,1\}^{\vert v \vert}}}
c_{v,k_v,k_{d+1}}\;
u(T_{\bsz}\bsx,\bsalpha-(k_v;0;k_{d+1})).
\]
For $\alpha_i\geq 2$ with $1\leq i\leq d$, $\alpha_{d+1}\geq d$ and arbitrary
$\bsx,\bsz\in[0,1]^d$, holds 
$u(T_{\bsz}\bsx,\bsalpha-(k_v;0;k_{d+1}))\leq 1$, where $v\subseteq [d]$, 
$k_v\in \{0,1\}^{\vert v\vert}$ and $k_{d+1}\in [d]$. Then, it follows that
$
\left\vert \frac{\partial^{\vert v \vert}}{\partial x_v} u(T_{\bsz}\bsx,\bsalpha) \right \vert
\leq C^{(1)}_{d,\bsalpha}<\infty,
$
with a constant $C^{(1)}_{d,\bsalpha}$ depending on $d$ and $\bsalpha$.
Hence, for another constant $C^{(2)}_{d,\bsalpha}$ holds
$
\left \Vert  u(T_{\bsz}\, \cdot,\bsalpha) \right \Vert_{H_1} 
\leq C^{(2)}_{d,\bsalpha} <\infty
$
uniformly in $\bsz\in [0,1]^d$. Finally, by the fact
that $u(\bsx) \le 1$
we obtain the following corollary.
\begin{corollary}
	For $\alpha_i\geq 2$ with $1\leq i\leq d$ and $\alpha_{d+1}\geq d$ we have for $u(\bsx,\bsalpha)$
	defined in \eqref{eq: dirichlet_density} that there is a constant $C_{d,\bsalpha}$ such that
	\[
	\frac{ \Vert u(\cdot,\bsalpha)\Vert_D}{\int_{[0,1]^d} u(\bsx,\bsalpha) \rd \bsx} \leq C_{d,\bsalpha} < \infty.
	\]
	
\end{corollary}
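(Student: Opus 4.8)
The plan is to assemble the corollary directly from the estimates collected just above its statement, so the proof is essentially bookkeeping rather than new work. Recall that $\Vert u \Vert_D = \sup_{\bsz} u(\bsz) + \sup_{\bsz} \Vert u(T_{\bsz}\cdot)\Vert_{\widetilde H_1}$ and that $\Vert u(T_{\bsz}\cdot)\Vert_{\widetilde H_1} \le \Vert u(T_{\bsz}\cdot)\Vert_{H_1}$. The text already produces two uniform bounds: $u(\bsx,\bsalpha)\le 1$ for all $\bsx\in[0,1]^d$ (since under $\alpha_i\ge 2$, $\alpha_{d+1}\ge d$ the exponents in \eqref{eq: dirichlet_density} are nonnegative and each factor lies in $[0,1]$), and $\Vert u(T_{\bsz}\cdot,\bsalpha)\Vert_{H_1}\le C^{(2)}_{d,\bsalpha}$ uniformly in $\bsz$. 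Hence $\Vert u(\cdot,\bsalpha)\Vert_D \le 1 + C^{(2)}_{d,\bsalpha} =: \widetilde C_{d,\bsalpha} < \infty$.

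Next I would handle the denominator. By \eqref{eq: norm_const_dirich}, $\int_{[0,1]^d} u(\bsx,\bsalpha)\,\rd\bsx = \prod_{i=1}^{d+1}\Gamma(\alpha_i)\big/\Gamma\big(\sum_{i=1}^{d+1}\alpha_i\big)$, which is a fixed strictly positive real number (the Gamma function is positive on $(0,\infty)$, and all $\alpha_i$ as well as their sum lie in $(0,\infty)$), depending only on $d$ and $\bsalpha$. Therefore its reciprocal is finite, and I can set
\[
C_{d,\bsalpha} := \widetilde C_{d,\bsalpha}\cdot \frac{\Gamma\big(\sum_{i=1}^{d+1}\alpha_i\big)}{\prod_{i=1}^{d+1}\Gamma(\alpha_i)} < \infty,
\]
which gives the claimed bound $\Vert u(\cdot,\bsalpha)\Vert_D \big/ \int_{[0,1]^d} u(\bsx,\bsalpha)\,\rd\bsx \le C_{d,\bsalpha}$.

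The only genuine content to be careful about is making the two uniform-in-$\bsz$ claims fully rigorous, which is exactly what the paragraph preceding the corollary sketches: via the Lemma and the chain rule one has $\frac{\partial^{|v|}}{\partial x_v} u(T_{\bsz}\bsx,\bsalpha) = \prod_{i\in v} z_i \sum_{k} c_{v,k_v,k_{d+1}}\, u(T_{\bsz}\bsx,\bsalpha-(k_v;0;k_{d+1}))$, and one must check that each shifted parameter vector $\bsalpha-(k_v;0;k_{d+1})$ still has all entries $\ge 1$ so that $u(T_{\bsz}\bsx,\cdot)\le 1$ pointwise. This is where the hypotheses $\alpha_i\ge 2$ (so $\alpha_i - k_i \ge 1$ for $k_i\in\{0,1\}$) and $\alpha_{d+1}\ge d$ (so $\alpha_{d+1}-k_{d+1}\ge \alpha_{d+1}-d\cdot\ge\ldots$, using $k_{d+1}=|v|-\sum_{i\in v}k_i\le |v|\le d$) are used. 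Bounding $|\prod_{i\in v} z_i|\le 1$ and $\sum_k |c_{v,k_v,k_{d+1}}|$ by a finite constant depending only on $d,\bsalpha$, then integrating over $[0,1]^{|v|}$ and summing over the finitely many $v\subseteq[d]$, yields $C^{(1)}_{d,\bsalpha}$ and then $C^{(2)}_{d,\bsalpha}$. I expect this verification that the shifted Dirichlet parameters stay $\ge 1$ — keeping careful track of the index $d+1$ and the constraint on $k_{d+1}$ — to be the main (though still routine) obstacle; the rest is plugging finite bounds into the definition of $\Vert\cdot\Vert_D$ and dividing by the explicit normalizing constant.

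\begin{proof}
	Under the hypotheses $\alpha_i \ge 2$ for $1\le i\le d$ and $\alpha_{d+1}\ge d$ we first observe that $u(\bsx,\bsalpha)\le 1$ for all $\bsx\in[0,1]^d$: on $\Delta_d$ every factor in \eqref{eq: dirichlet_density} has a nonnegative exponent and a base in $[0,1]$, and $u$ vanishes off $\Delta_d$. Likewise, for any $v\subseteq[d]$, any $k_v\in\{0,1\}^{|v|}$ and $k_{d+1}=|v|-\sum_{i\in v}k_i\in\{0,\dots,d\}$, the shifted parameter vector $\bsalpha-(k_v;0;k_{d+1})$ has all coordinates at least $1$ (since $\alpha_i-k_i\ge 1$ and $\alpha_{d+1}-k_{d+1}\ge \alpha_{d+1}-d\ge 0$, in fact $\alpha_{d+1}-k_{d+1}\ge 1$ whenever $k_{d+1}<\alpha_{d+1}$, and in the degenerate case one reads $u\equiv 0$), so that $u(T_{\bsz}\bsx,\bsalpha-(k_v;0;k_{d+1}))\le 1$ for all $\bsx,\bsz\in[0,1]^d$. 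Combining this with the chain-rule identity displayed above the corollary and $\bigl|\prod_{i\in v} z_i\bigr|\le 1$ gives
	\[
	\left|\frac{\partial^{|v|}}{\partial x_v} u(T_{\bsz}\bsx,\bsalpha)\right|
	\le \sum_{\underset{k_{d+1}=|v|-\sum_{i\in v}k_i}{k_v\in\{0,1\}^{|v|}}} |c_{v,k_v,k_{d+1}}| =: C^{(1)}_{d,\bsalpha,v} < \infty,
	\]
	uniformly in $\bsx,\bsz\in[0,1]^d$; here $C^{(1)}_{d,\bsalpha,v}$ depends only on $d$, $\bsalpha$ and $v$, as the $c_{v,k_v,k_{d+1}}$ are explicit products of factors $\alpha_{d+1}-j$ and $\alpha_i-1$.

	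Integrating this bound over $[0,1]^{|v|}$ and summing over the (finitely many) nonempty $v\subseteq[d]$ yields, for a constant $C^{(2)}_{d,\bsalpha}:=\sum_{\emptyset\ne v\subseteq[d]} C^{(1)}_{d,\bsalpha,v}$,
	\[
	\left\Vert u(T_{\bsz}\cdot,\bsalpha)\right\Vert_{\widetilde H_1}
	\le \left\Vert u(T_{\bsz}\cdot,\bsalpha)\right\Vert_{H_1}
	\le C^{(2)}_{d,\bsalpha} < \infty
	\]
	uniformly in $\bsz\in[0,1]^d$. Together with $\sup_{\bsz\in[0,1]^d} u(\bsz,\bsalpha)\le 1$ and the definition \eqref{eq: u_D} of $\Vert\cdot\Vert_D$ we obtain
	\[
	\Vert u(\cdot,\bsalpha)\Vert_D \le 1 + C^{(2)}_{d,\bsalpha} < \infty.
	\]

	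Finally, by \eqref{eq: norm_const_dirich} the normalizing constant
	\[
	\int_{[0,1]^d} u(\bsx,\bsalpha)\,\rd\bsx = \frac{\prod_{i=1}^{d+1}\Gamma(\alpha_i)}{\Gamma\bigl(\sum_{i=1}^{d+1}\alpha_i\bigr)}
	\]
	is a strictly positive real number depending only on $d$ and $\bsalpha$, since $\Gamma$ is positive on $(0,\infty)$ and all $\alpha_i$ as well as $\sum_{i=1}^{d+1}\alpha_i$ lie in $(0,\infty)$. Hence, setting
	\[
	C_{d,\bsalpha} := \bigl(1 + C^{(2)}_{d,\bsalpha}\bigr)\,\frac{\Gamma\bigl(\sum_{i=1}^{d+1}\alpha_i\bigr)}{\prod_{i=1}^{d+1}\Gamma(\alpha_i)} < \infty,
	\]
	we conclude
	\[
	\frac{\Vert u(\cdot,\bsalpha)\Vert_D}{\int_{[0,1]^d} u(\bsx,\bsalpha)\,\rd\bsx} \le C_{d,\bsalpha} < \infty,
	\]
	which is the claim.
\end{proof}
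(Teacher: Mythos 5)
Your proof is correct and follows essentially the same route as the paper, which establishes the corollary in the paragraph preceding its statement via the chain-rule identity from the Lemma, the pointwise bounds $u(\cdot,\bsalpha-(k_v;0;k_{d+1}))\le 1$ and $\sup u\le 1$, and the explicit normalizing constant \eqref{eq: norm_const_dirich}. The only quibble is your treatment of the degenerate case $\alpha_{d+1}=k_{d+1}=d$: the correct resolution is not that ``$u\equiv 0$'' there, but that the coefficient $c_{v,k_v,k_{d+1}}$ contains the factor $\alpha_{d+1}-k_{d+1}=0$ and hence vanishes, so that term drops out of the sum.
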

This verifies that the application of Theorem~\ref{thm_int_error} 
and Theorem~\ref{weightedDiscr} is justified. For $\bsw^u$ given by \eqref{eq: weight_u} we obtain
\begin{equation*}
\left|
S(f,u(\cdot,\bsalpha))
- \frac{\sum_{i=1}^n f(\bsx_i)\, u(\bsx_i,\bsalpha)}{\sum_{i=1}^n u(\bsx_i,\bsalpha)}
\right| 
\le 4 \|f\|_{H_1} \; C_{d,\bsalpha} \,D_{\lambda_d}(P_n).
\end{equation*}
Consider $f_{\bsgamma}\colon [0,1]^d \to [0,1]$ with $\bsgamma\in (1,\infty)^{d}$ given by
$
f_{\bsgamma}(\bsx) = 2^{-d} \prod_{i=1}^d x_i^{\gamma_i}.
$
Then, by \eqref{eq: norm_const_dirich} we have
\begin{align*}
S(f_\bsgamma,u(\cdot,\bsalpha)) 
& = \frac{1}{2^d}\cdot \frac{\int_{[0,1]^d} u(\bsx,\alpha_1 +\gamma_1,\dots,\alpha_d+\gamma_d,\alpha_{d+1}) \rd \bsx}{
	\int_{[0,1]^d} u(\bsx,\bsalpha) \rd \bsx} \\
& = \frac{1}{2^d}\cdot \frac{\prod_{i=1}^{d} \Gamma(\alpha_i+\gamma_i)}{\prod_{i=1}^{d} \Gamma(\alpha_i)}
\cdot
\frac{\Gamma(\sum_{i=1}^{d+1}\alpha_i)}{\Gamma(\alpha_{d+1}+\sum_{i=1}^{d}(\alpha_i+\gamma_i))}
\end{align*}
and $\Vert f_\bsgamma \Vert_{H_1}=1$. Since we know $S(f_\bsgamma,u(\cdot,\bsalpha))$ we can run the quasi-Monte Carlo importance sampling
algorithm and plot the error for different $d$ and fixed $\bsalpha$
and $\bsgamma$.

\emph{Numerical experiments.}
Let $\bsgamma= (1,\dots,1) \in \mathbb{R}^d$ and $\bsalpha = (2,\dots,2,d)\in \mathbb{R}^d$.
Here the true expectation of $f_\bsgamma$ according to the distribution 
determined by $u(\cdot,\bsalpha)$ can be further simplified to
$
S(f_{\bsgamma},u(\cdot,\bsalpha)) = \frac{(3d-1)!}{(4d-1)!}.
$
Since for large $d$ this value is very small we plot the normalized error. 
For a given point set $P_n$ it is defined by
\begin{equation}  \label{eq: norm_err}
{\rm error}(P_n) = 
\left|
1
- 
\frac{Q_n(f_\bsgamma,u(\cdot,\bsalpha))}{S(f,u(\cdot,\bsalpha))}
\right|,
\end{equation}
and can be computed exactly. 
Let $H_n$ the first $n$ points of the Halton sequence and note that it is known that
$D_{\lambda_d}(H_n) \leq O\left(\frac{(\log n)^d}{n}\right)$. 
By $S_n$ we denote the first $n$ points of the Sobol sequence. 
For details to those standard quasi-Monte Carlo point sets we refer to \cite{DiPi10}.
We obtain the following plots for $d=2,4,6$.

\begin{figure}[htb]
	\centering
	\begin{minipage}[t]{0.45\linewidth}
		\centering
		\includegraphics[width=\linewidth]{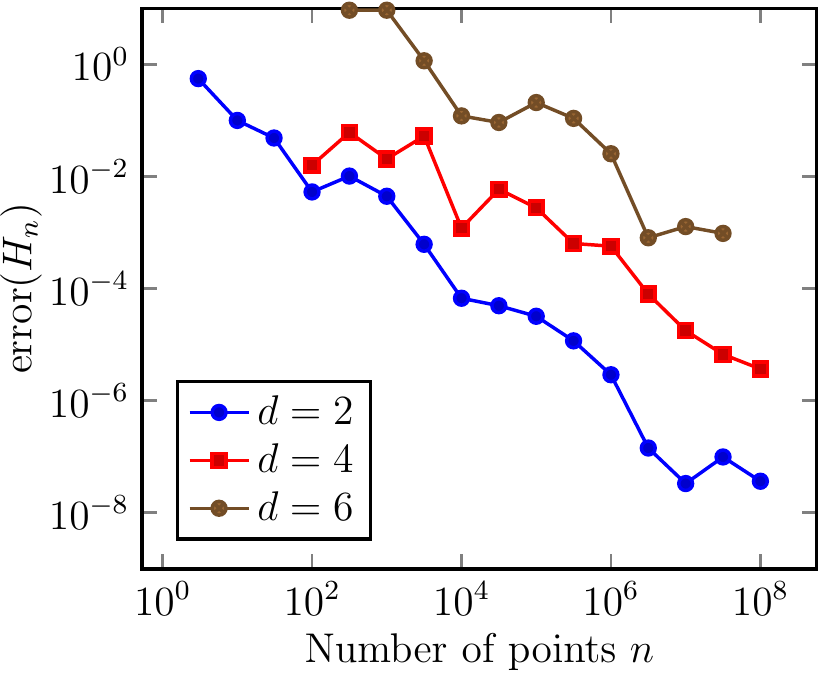}
		\caption{Plot of the normalized error \eqref{eq: norm_err} 
			of $Q_n(f_{\bsgamma},u(\cdot,\bsalpha))$ 
			based on the Halton sequence $H_n$ for $d=2,4,6$.}
	\end{minipage}
	\hfill
	\begin{minipage}[t]{0.45\linewidth}
		\centering
		\includegraphics[width=\linewidth]{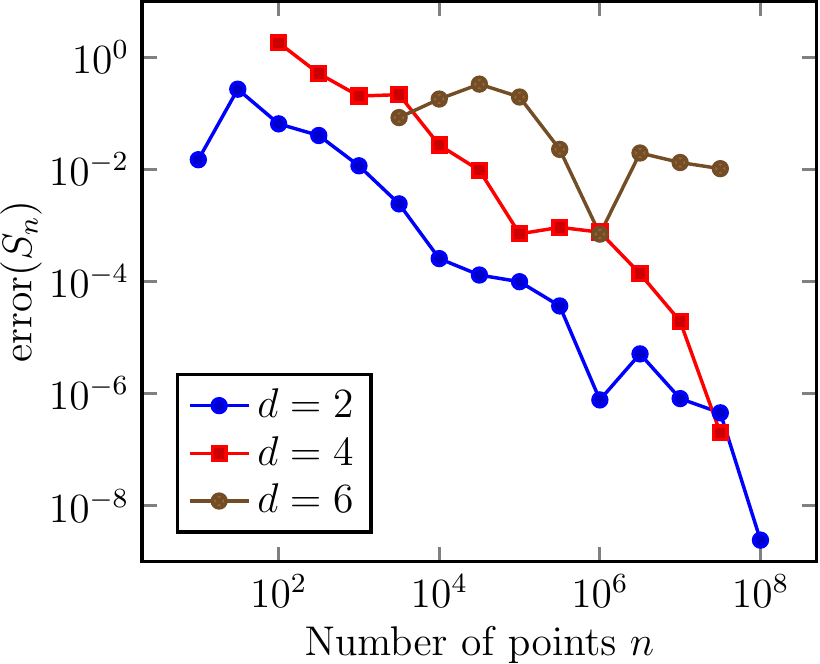}
		\caption{Plot of the normalized error \eqref{eq: norm_err} 
			of $Q_n(f_{\bsgamma},u(\cdot,\bsalpha))$ 
			based on the Sobol sequence $S_n$ for $d=2,4,6$.}
	\end{minipage}
\end{figure}

\subsubsection*{Acknowledgment}
D. Rudolf is supported by
	the Felix-Bernstein-Institute for Mathematical Statistics in the Biosciences, the Campus laboratory AIMS and 
	the DFG within the project 389483880.

\bibliographystyle{amsplain}

\begin{thebibliography}{10}
	
	\bibitem{AgPaSaSt15}
	S.~Agapiou, O.~Papaspiliopoulos, D.~Sanz-Alonso, and A.~Stuart,
	\emph{{Importance {S}ampling: {I}ntrinsic {D}imension and {C}omputational
			{C}ost}}, Statist. Sci. \textbf{32}.
	
	\bibitem{AiDi15}
	Ch. Aistleitner and J.~Dick, \emph{Functions of bounded variation, signed
		measures, and a general {K}oksma--{H}lawka inequality}, Acta Arith.
	\textbf{167} (2015), 143--171.
	
	\bibitem{ACHLT18}
	P.~Arbenz, M.~Cambou, M.~Hofert, C.~Lemieux, and Y.~Taniguchi, \emph{Importance
		sampling and stratification for copula models}, Contemporary Computational
	Mathematics - a celebration of the 80th birthday of Ian Sloan (J. Dick, F. Y.
	Kuo, H. Wo\'zniakowski, eds.), Springer-Verlag, 2018.
	
	\bibitem{ChDi15}
	S.~Chatterjee and P~Diaconis, \emph{{The sample size required in importance
			sampling}}, Ann. Appl. Probab. \textbf{28} (2018), 1099--1135.
	
	\bibitem{DiGaLGiSc16}
	J.~{Dick}, R.~N. {Gantner}, Q.~T. {Le Gia}, and C.~{Schwab}, \emph{{Higher
			order Quasi-Monte Carlo integration for Bayesian Estimation}}, ArXiv e-prints
	(2016).
	
	\bibitem{DiGaLGiSc17}
	\bysame, \emph{Multilevel higher-order quasi-{M}onte {C}arlo {B}ayesian
		estimation}, Math. Models Methods Appl. Sci. \textbf{27} (2017), 953--995.
	
	\bibitem{DiHiPi14}
	J.~Dick, A.~Hinrichs, and F.~Pillichshammer, \emph{Proof {T}echniques in
		{Q}uasi-{M}onte {C}arlo {T}heory}, J. Complexity \textbf{31} (2015),
	327--371.
	
	\bibitem{DiPi10}
	J.~Dick and F.~Pillichshammer, \emph{Digital nets and sequences: {D}iscrepancy
		theory and quasi-{M}onte {C}arlo integration}, Cambridge University Press,
	Cambridge, 2010.
	
	\bibitem{ChGe14}
	M.~{Gerber} and N.~{Chopin}, \emph{{Sequential Quasi-Monte Carlo}}, J. R. Stat.
	Soc. Ser. B. Stat. Methodol. \textbf{77} (2015), 509--579.
	
	\bibitem{HoLe05}
	W.~H{\"o}rmann and J.~Leydold, \emph{Quasi importance sampling}, Preprint,
	Available at {\rm http://epub.wu.ac.at/1394/} (2005).
	
	\bibitem{Jo04}
	G.~Jones, \emph{On the {M}arkov chain central limit theorem}, Probab. Surv.
	\textbf{1} (2004), 299--320.
	
	\bibitem{JoOl10}
	A.~Joulin and Y.~Ollivier, \emph{{C}urvature, concentration and error estimates
		for {M}arkov chain {M}onte {C}arlo}, Ann. Probab. \textbf{38} (2010), no.~6,
	2418--2442.
	
	\bibitem{LaMiNi09}
	K.~{\L}atuszy{\'n}ski, B.~Miasojedow, and W.~Niemiro, \emph{Nonasymptotic
		bounds on the estimation error of {MCMC} algorithms}, Bernoulli \textbf{19}
	(2013), 2033--2066.
	
	\bibitem{MaNo07}
	P.~Math{\'e} and E.~Novak, \emph{{S}imple {M}onte {C}arlo and the {M}etropolis
		algorithm}, J. Complexity \textbf{23} (2007), no.~4-6, 673--696.
	
	\bibitem{NoWo10}
	E.~Novak and H.~Wo{\'z}niakowski, \emph{Tractability of multivariate problems.
		{V}ol. 2: {S}tandard information for functionals}, EMS Tracts in Mathematics,
	vol.~12, European Mathematical Society (EMS), Z{\"u}rich, 2010.
	
	\bibitem{Ow13}
	A.~Owen, \emph{Monte {C}arlo theory, methods and examples}, 2013, in
	preparation.
	
	\bibitem{Pa16}
	D.~Paulin, \emph{Mixing and concentration by {R}icci curvature}, Journal of
	Functional Analysis \textbf{270} (2016), no.~5, 1623 -- 1662.
	
	\bibitem{RoTaFl15}
	V.~Roy, A.~Tan, and J.~Flegal, \emph{{Estimating standard errors for importance
			sampling estimators with multiple {M}arkov chains}}, ArXiv e-prints
	1509.06310 (2015).
	
	\bibitem{Ru09}
	D.~Rudolf, \emph{Explicit error bounds for lazy reversible {M}arkov chain
		{M}onte {C}arlo}, J. Complexity \textbf{25} (2009), no.~1, 11--24.
	
	\bibitem{Ru10}
	\bysame, \emph{Error bounds of computing the expectation by {M}arkov chain
		{M}onte {C}arlo}, Monte Carlo Methods Appl. \textbf{16} (2010), 323--342.
	
	\bibitem{Ru12}
	\bysame, \emph{Explicit error bounds for {M}arkov chain {M}onte {C}arlo},
	Dissertationes Math. \textbf{485} (2012), 93 pp.
	
	\bibitem{RuSp2018}
	D.~Rudolf and B.~Sprungk, \emph{{On a Metropolis-Hastings importance sampling
			estimator}}, ArXiv e-prints (2018).
	
	\bibitem{CoVa10}
	B.~Vandewoestyne and R.~Cools, \emph{On the convergence of quasi-random
		sampling/importance resampling}, Math. Comput. Simulation \textbf{81} (2010),
	no.~3, 490--505.
	
	\bibitem{ViHeFr16}
	M.~Vihola, J.~Helske, and J.~Franks, \emph{Importance sampling type correction
		of {M}arkov chain {M}onte {C}arlo and exact approximations}, ArXiv e-prints
	1609.02541 (2016).
	
\end{thebibliography}

\providecommand{\bysame}{\leavevmode\hbox to3em{\hrulefill}\thinspace}
\providecommand{\MR}{\relax\ifhmode\unskip\space\fi MR }
\providecommand{\MRhref}[2]{%
	\href{http://www.ams.org/mathscinet-getitem?mr=#1}{#2}
}
\providecommand{\href}[2]{#2}

\end{document}